\newtheorem{theorem}{Theorem}
\begin{document}
\title{Searching for Maximum Out-Degree \\
Vertices in Tournaments}

\date{}

\author[1]{Gregory Gutin\thanks{Partially supported by the Royal Society Wolfson Research Merit Award.}}
\author[2]{George B. Mertzios\thanks{Partially supported by the EPSRC Grant~EP/P020372/1.}}
\author[1]{Felix Reidl}

\affil[1]{Dept.~of Comp. Science, Royal Holloway, University of London, UK. \newline
Email: \texttt{g.gutin@rhul.ac.uk}, \texttt{felix.reidl@rhul.ac.uk}}
\affil[2]{Department of Computer Science, Durham University, UK. \newline
Email: \texttt{george.mertzios@durham.ac.uk}}



\maketitle

\pagestyle{plain}

\begin{abstract}
A vertex $x$ in a tournament $T$ is called a king  if for every vertex $y$ of
$T$ there is a directed path from $x$ to $y$ of length at most 2. It is not
hard to show that every vertex of maximum out-degree in a tournament is a
king. However, tournaments may have kings which are not vertices of maximum
out-degree. A binary inquiry asks for the orientation of the edge between a
pair of vertices and receives the answer. The cost of finding a king in an
unknown tournament is the number of binary inquiries required to detect a
king. For the cost of finding a king in a tournament, in the worst case, Shen,
Sheng and Wu (SIAM J. Comput., 2003) proved a lower and upper bounds  of
$\Omega(n^{4/3})$ and $O(n^{3/2})$, respectively. In contrast to their result,
we prove that the cost of finding a vertex of maximum out-degree is ${n
\choose 2} -O(n)$ in the worst case.\\

{\em Keywords:}  tournament; king; maximum out-degree vertex; binary inquiry 
\end{abstract}

\section{Introduction}

A {\em tournament} is an orientation of a complete graph. A vertex $x$ in a
tournament $T$ is called a {\em king}  if for every vertex $y$ of $T$ there is
a directed path from $x$ to $y$ of length at most 2. We use standard
terminology and notation for directed graphs, see e.g. \cite{BangG09}. 

Kings in tournaments were introduced by Landau \cite{Landau53} who studied
animal societies and observed that in every animal society ``there are members
who dominate every other member either directly or indirectly through a single
intermediate member.'' Inspired by this observation, Landau \cite{Landau53}
proved that every tournament has a king. His result can easily be shown as
follows. Let $x$ be a vertex of maximum out-degree (MOD) in a tournament
$T=(V,A)$ and let $y$ be another vertex of $T$. Then either $xy\in A$ or there
is a vertex $z$ such that both $xz$ and $zy$ are arcs of $T$ (otherwise, $x$
is not of maximum out-degree). However, it is not hard to construct
tournaments where not only MOD vertices are kings.

An ordering $x_1,x_2,\dots ,x_n$ of vertices of a tournament $T=(V,A)$ is
called a {\em sorted sequence of kings} if $x_ix_{i+1}\in A$ for $i=1,2,\dots
,n-1$ and $x_j$ is a king in the subgraph of $T$ induced by the vertices
$x_j,x_{j+1},\dots ,x_n.$ Lou, Wu and Sheng \cite{LouWS00} showed that every
tournament has a sorted sequence of kings. Sorted sequence of kings are of
interest due to their useful properties, e.g. every such sequence is a median
order of $T$ \cite{ShenSW03}, i.e. an ordering of $V$ such that the number of
arcs $x_jx_i$ with $j>i$ is minimum possible. Median orders in tournaments
were used, among other things, by Havet and Thomass\'e to prove Seymour's
Second Neighborhood Conjecture restricted to tournaments \cite{HavetT00}.

Wu and Sheng \cite{WuS01} designed an $O(n^2)$-time algorithm for finding a
sorted sequence of kings in a tournament with $n$ vertices. A {\em binary
inquiry} asks for the orientation of the edge between a pair of vertices and
receives the answer. Shen, Sheng and Wu \cite{ShenSW03} studied the {\em cost}
of finding a sorted sequence of kings, which is the number of binary inquiries
required to find such a sequence (including a king) in an unknown tournament.
They proved that the cost is $\Theta(n^{3/2})$ in the worst case. Shen, Sheng
and Wu \cite{ShenSW03} also studied the cost of finding a king in a
tournament. They showed a lower bound $\Omega(n^{4/3})$ for the cost in the
worst case. Clearly, $O(n^{3/2})$ is an upper bound. This has left a cost gap
between $\Omega(n^{4/3})$ and $O(n^{3/2})$ for finding a king, and closing
this gap remains an open problem \cite{ShenSW03}.

In this short paper, we prove that the cost of finding an MOD vertex is ${n
\choose 2} - O(n)$ in the worst case. Thus, the worst case costs of finding an
arbitrary king and that of maximum out-degree are quite different and while searching 
for kings, it is not a good idea to restrict ourselves to only MOD vertices.

As a warming-up, consider a related question of the cost of deciding whether a
tournament $T$ has a vertex of in-degree zero; clearly such a vertex is a king
and $T$ has no other kings.\footnote{It is well-known that a tournament
without vertices of in-degree zero has at least three kings \cite{BangG09}.}
Consider the following procedure. Repeat until $T$ has only one vertex: choose
a maximum matching $M$ in the complete graph with vertices $V(T)$, ask for the
orientation of each edge of $M$ in $T$ and for every $\{x,y\}\in M$ delete from $T$
vertex $x$ or $y$ if $xy$ is oriented towards $x$ or $y$, respectively.
After completion, check whether the last remaining
vertex $z$ has no arcs to it from the other vertices in the original $T$. Now
it is not hard to see that the cost of deciding whether $T$ has a vertex of
in-degree zero is less than $2n$ in the worst case, where $n$ is the number of
vertices in $T$. Indeed, the number of inquiries in the repeat loop is less
than $n$ and so is the number of inquiries for orientations of edges incident
to $z$.

\section{Finding an MOD Vertex}

A tournament $T$ on an odd number $n$ of vertices is {\em
regular} if  every vertex of $T$ is of out-degree of $(n-1)/2.$ Thus, all 
in-degrees in a regular tournament are also $(n-1)/2.$ A tournament $T$ on an 
even number $n$ of vertices is {\em almost regular} if  every vertex of $T$ is of
out-degree $n/2$ or $n/2 -1.$ Observe that an almost regular tournament has
$n/2$ vertices of out-degree $n/2$ and in-degree $n/2 -1$ and $n/2$ vertices
of out-degree $n/2-1$ and in-degree $n/2.$

\begin{theorem} 
  The worst case cost of finding an MOD vertex in a tournament
  on $n$ vertices is at least  $(n-1)^2/2$ when $n$ is odd and at least
  $(n-1)(n-2)/2$ when $n$ is even.
\end{theorem}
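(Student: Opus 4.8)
The plan is to recast the query model as follows. An algorithm adaptively reveals the orientations of a set $Q$ of edges and then outputs a vertex $x$; for correctness, $x$ must be a maximum out-degree vertex in \emph{every} tournament agreeing with the revealed orientations on $Q$. For a vertex $v$, write $d_Q^+(v)$ and $d_Q^-(v)$ for the number of edges of $Q$ oriented out of, respectively into, $v$. The heart of the argument is the following \emph{certification inequality}: if $x$ is a correct output, then
\[
  d_Q^+(x) + d_Q^-(y) \ge n-1 \qquad \text{for every } y \ne x.
\]
I would prove this by contraposition: if $d_Q^+(x) + d_Q^-(y) \le n-2$ for some $y$, I build a completion in which $y$ beats $x$. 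Orient every unrevealed edge at $x$ into $x$ and every unrevealed edge at $y$ out of $y$ (these prescriptions agree on the only shared edge $xy$, which becomes $y\to x$), and orient all remaining edges arbitrarily. Then $d^+(x)=d_Q^+(x)$ while $d^+(y)=(n-1)-d_Q^-(y)\ge d_Q^+(x)+1>d^+(x)$, so $x$ is not of maximum out-degree, contradicting correctness.

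Next I would fix the hard instance. For odd $n$ let $T$ be any regular tournament and for even $n$ any almost regular tournament (such tournaments exist). Running a correct deterministic algorithm on $T$ reveals the $T$-orientation of each queried edge; since any tournament agreeing with $T$ on $Q$ triggers the same adaptive query sequence and the same output, the certification inequality applies to the output $x$. Crucially, because $Q$ is oriented according to $T$ we have $d_Q^+(v)\le d_T^+(v)$ for all $v$, hence $d_Q^+(x)\le (n-1)/2$ in the odd case and $d_Q^+(x)\le n/2$ in the even case. This is the only place where regularity of the instance is used, so any such $T$ serves as a worst-case input.

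Finally I would do the counting. Summing the certification inequality over the $n-1$ vertices $y\neq x$ and using $\sum_{v} d_Q^-(v)=|Q|$ gives
\[
  (n-1)\,d_Q^+(x) + \bigl(|Q| - d_Q^-(x)\bigr) \ \ge\ (n-1)^2,
\]
so that $|Q| \ge (n-1)^2 - (n-1)\,d_Q^+(x) + d_Q^-(x) \ge (n-1)^2 - (n-1)\,d_Q^+(x)$. Substituting $d_Q^+(x)\le (n-1)/2$ yields $|Q|\ge (n-1)^2/2$ for odd $n$, and $d_Q^+(x)\le n/2$ yields $|Q|\ge (n-1)(n-2)/2$ for even $n$, as required.

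I expect the certification inequality to be the main obstacle: everything else is either a standard existence fact or routine arithmetic, whereas the inequality is where one must argue that \emph{local} revealed information about a single pair $(x,y)$ already constrains a quadratic number of queries. It is essential there that the unrevealed edges at $x$ and at $y$ can be oriented antagonistically and independently, which is what makes the bound $(n-1)-d_Q^-(y)$ on $d^+(y)$ simultaneously achievable with $d^+(x)=d_Q^+(x)$.
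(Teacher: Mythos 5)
Your proof is correct, and at heart it is the same adversary argument as the paper's: identical hard instances (a regular tournament for odd $n$, an almost regular one for even $n$), the same device of re-orienting unqueried edges at $x$ into $x$ and unqueried edges at some $y$ out of $y$, and the same final count of revealed in-edges at the vertices $y\neq x$. What you do differently is the decomposition. The paper runs, in each parity case, two separate completion arguments, both of which orient all untouched edges \emph{according to the hidden tournament} $T$: the first pins $d^+(x)$ to exactly $(n-1)/2$ (resp.\ $n/2$), the second lower-bounds $d^-(y)$ for each $y\neq x$. You fold both re-orientations into a single completion and isolate one instance-free lemma, your certification inequality $d_Q^+(x)+d_Q^-(y)\ge n-1$ for all $y\neq x$, whose proof never refers to $T$ at all (the remaining edges may be oriented arbitrarily, because the vertex that ends up beating $x$ is $y$ itself); (almost) regularity then enters exactly once, through the cap $d_Q^+(x)\le d_T^+(x)$. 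This modularization is a genuine, if modest, gain in clarity and generality: your inequality yields $q\ge (n-1)\bigl(n-1-d_T^+(x)\bigr)$ for \emph{any} input tournament, which makes transparent that (almost) regular tournaments are extremal precisely because they minimize the maximum out-degree, and it would give lower bounds for other instance classes for free. The paper's two-step version needs the concrete instance inside both completion arguments but obtains slightly sharper intermediate facts (it determines $d^+(x)$ exactly and bounds each $d^-(y)$ individually). The closing count is the same in both proofs: your identity $\sum_{y\neq x} d_Q^-(y)=|Q|-d_Q^-(x)$ is exactly the paper's bound $q\ge\sum_{z\neq x}d^-(z)$.
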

\begin{proof} 
Let $T=(V,A)$ be a tournament on $n$ vertices. Consider an algorithm which
determines an MOD vertex of $T$ using binary inquiries. The algorithm knows
$V$, but initially no element of $A$.  Let $q$ be the number of inquiries used
by the algorithm before it correctly returns an MOD vertex $x$ of $T$. Let us
denote by $d^+(y)$ and $d^-(y)$ the number of arcs known to be oriented from
and towards $y$, respectively, after the $q$ inquiries are used.

As in \cite{ShenSW03}, we introduce an adversary (to the algorithm) that
answers questions of the binary inquiries in such a way as to delay the time
the algorithm returns $x$. However, our strategy for the adversary is
different from that in  \cite{ShenSW03}. The
adversary has $T$ as a concrete regular or almost regular tournament
(depending on parity of $n$) and answers the inquiries according to the
orientations of the arcs in $T$. However, after the algorithm returns $x$, the
adversary checks whether the remaining ${n \choose 2} -q$ inquiries can be
answered in a such a way (not necessarily according to the  orientations of
arcs in $T$) that $x$ is not an MOD vertex. Let us assume that the algorithm
correctly decided that $x$ is an MOD vertex and consider two cases.

\begin{description}
\item[ Case 1:]   $n$ is odd. Suppose that $d^+(x)<(n-1)/2.$ Then the
adversary can answer all remaining questions on arcs incident with $x$ by
saying the arcs are all oriented towards $x$ and answer all other remaining
questions according to the arcs in $T$. Then in the resulting tournament $T'$,
$x$ will be of out-degree less than $(n-1)/2,$ but another vertex of $T'$ will
be of out-degree at least $(n-1)/2$, which is a contradiction to our
assumption that the algorithm is correct and $x$ is an MOD vertex. Thus,
$d^+(x)\ge (n-1)/2$ and since $T$ is regular we have $d^+(x) = (n-1)/2.$

Suppose there is a vertex $y\neq x$ such that $d^-(y)<(n-1)/2.$ Then the
adversary can answer all remaining questions on arcs incident with $y$ by
saying that they are all oriented away from $y$ and answer all other remaining
questions according to the arcs in $T$. Then the out-degree of $y$ in the
resulting tournament will be larger than $(n-1)/2,$ a contradiction. Thus, for
all $y\neq x$ we have $d^-(y)\ge (n-1)/2$. Therefore, the number of queries must
satisfy
\[ 
  q\ge \sum_{z\in V\setminus \{x\}} d^-(z)\ge (n-1)^2/2.
\]

\item[ Case 2:]   $n$ is even. Recall that half of the vertices in $T$ are of
out-degree $n/2$ and the others of out-degree $n/2 - 1.$ Similarly to Case 1,
we can show that $d^+(x)=n/2.$ Suppose that there is a vertex $y\neq x$ with
$d^-(y)\le n/2 - 2.$ Then the adversary can answer all remaining questions on
arcs incident with $y$ by saying that they are all oriented away from $y$ and
get $d^+(y)>n/2,$ a contradiction. Thus, for all $y\neq x$ we have $d^-(y)\ge
n/2 -1.$ Therefore, the number of queries must
satisfy
\[ 
  q\ge \sum_{z\in V\setminus \{x\}} d^-(z)\ge (n-1)(n-2)/2.
\]
\end{description}

\noindent
The two lower bounds on~$q$ derived above now prove the claim.

\end{proof}


\end{document}